\pdfoutput=1
\documentclass[12pt]{article}
\usepackage[utf8]{inputenc}
\usepackage[margin=2.5cm]{geometry}
\usepackage{amsmath,amssymb,amsthm}
\usepackage{graphicx}
\usepackage{hyperref}
\usepackage{lineno}
\theoremstyle{plain}
\newtheorem{theorem}{Theorem}
\newtheorem{proposition}[theorem]{Proposition}
\newtheorem{lemma}[theorem]{Lemma}
\newtheorem{corollary}[theorem]{Corollary}
\theoremstyle{definition}
\newtheorem{definition}{Definition}
\theoremstyle{remark}
\newtheorem{remark}{Remark}

\newcommand{\ket}[1]{\lvert #1 \rangle}
\newcommand{\bra}[1]{\langle #1 \rvert}
\newcommand{\braket}[2]{\langle #1 \vert #2 \rangle}
\newcommand{\ketbra}[2]{\lvert #1 \rangle\langle #2 \rvert}
\newcommand{\C}{\mathbb C}
\newcommand{\Hs}{\mathcal H}
\newcommand{\dd}{\mathrm d}
\newcommand{\Dom}{\mathcal D}

\title{Coherent-State Integrals Beyond the Domain of Operator Products}
\author{Ghasem Asadi Cordshooli\\[2mm]
\small Department of Physics, YI. C., Islamic Azad University, Tehran, Iran\\
\small E-mail: \texttt{ghascor@iau.ac.ir}}
\date{}

\newcommand{\doi}[1]{\href{https://doi.org/#1}{doi:#1}}
\newcommand{\arxivlink}[1]{\href{https://arxiv.org/abs/#1}{arXiv:#1}}

\begin{document}
\maketitle

\begin{abstract}
Inserting a resolution of the identity between two operators turns a matrix element of their product into a phase-space integral. For unbounded operators the product need not exist on the vector at hand, while the integral may still converge. We follow this gap through the chain $e^{\mu a^2}e^{\lambda a^{\dagger2}}$ of two-photon exponentials, where it can be traced completely. Five regions of parameter space appear: where the Bochner conditions of the strong formulation hold, where the product exists as a vector, where it exists as a sesquilinear form, where the coherent-state integral converges, and where the Fock-space series converges. The first three are strictly nested inside the intersection of the last two, and these two overlap without either containing the other. A single Hermite-type function agrees with the matrix element where it exists and with the integral and the series wherever they converge, so the integral carries the analytic continuation of the matrix element beyond the domain of the product and beyond the convergence of its Fock series. On the real slice of parameter space, an iterated form of the integral also reaches every point where the series converges. The analysis rests on a weak composition lemma whose hypotheses concern only the bra, the ket and the resolving family. For bounded chains the same identity is unconditional, and worked examples of that kind are given in a companion paper.
\end{abstract}

\noindent\textbf{Keywords:} resolution of the identity, coherent states, unbounded operators, sesquilinear forms, Bargmann space, SU(1,1), oscillator semigroup, analytic continuation

\section{Introduction}
\label{sec:intro}

Continuous resolutions of the identity are a basic tool of mathematical physics. Their modern form goes back to Schr\"odinger's minimum-uncertainty packets \cite{Schrodinger1926}, to Klauder's continuous-representation theory \cite{Klauder1963,Klauder1963_2} and to the coherent-state description of optical coherence by Glauber \cite{Glauber1963} and Sudarshan \cite{Sudarshan1963}. The group-theoretical generalisation is due to Perelomov \cite{Perelomov1986}, and systematic accounts are given in Refs.~\cite{KlauderSudarshan1968,AliAntoineGazeau2014,CombescureRobert2012}. The analytic side of the canonical case is the Segal--Bargmann representation \cite{Bargmann1961,Folland1989,Zhu2012}, in which integral identities for entire functions take a natural form \cite{Carlen1991}. In quantum optics the resolution of the identity underlies the ordered expansions of Cahill and Glauber \cite{CahillGlauber1969} and the technique of integration within ordered products \cite{FanZaidiKlauder1987,Fan2003,FanYuan2010}.

In many computations the resolution is inserted between the factors of a product $F_N\cdots F_1$. A matrix element of the product then becomes an iterated integral whose kernels are the matrix elements $\bra{\gamma_k}F_k\ket{\gamma_{k-1}}$ of the individual factors. This is the discrete skeleton of the coherent-state path integral \cite{Klauder1963_2,DaubechiesKlauder1985,Klauder2011}, and it is also a practical route from the kernels of simple operators to matrix elements of composite ones. For bounded operators the step needs no comment. For unbounded operators it hides a question. The vector $F_1\psi$ need not lie in the domain of the next factor, so the product may not exist on the vector at hand, while the integral that replaces it can still converge. What the integral then represents, and how far beyond the product it reaches, is the subject of this paper.

Earlier work \cite{AsadiMirzaee2026,AsadiCordshooli2026v1} justified the insertion through Bochner integrals and Hille's theorem \cite{DiestelUhl1977,HillePhillips1957}, under which it holds as an identity between vectors. Here we start from the weaker reading, as an identity between matrix elements. A resolution of the identity with a positive measure defines an isometry into $L^2$, and this alone makes a single insertion between operators $A$ and $B$ valid once the bra lies in the domain of $A^*$, the resolving vectors lie in the domain of $A$, and the ket lies in the domain of $B$ (Lemma~\ref{lem:single}). The value obtained is the product in the sense of sesquilinear forms, which exists even when $AB\psi$ does not. Iterating gives the nested form (Theorem~\ref{thm:nested}), and the Bochner conditions reappear as the setting in which the identity holds for all bras at once (Proposition~\ref{prop:strong}).

Operators with Gaussian kernels have a well-developed theory. Howe's oscillator semigroup collects the bounded operators whose kernels are centred Gaussians, with an explicit composition formula and a positivity criterion \cite{Howe1988}. Hilgert identified it with the semigroup of Bargmann, Brunet and Kramer in the Bargmann--Fock representation and determined which functions of Gaussian type on $\C^n$ lie in the Fock space \cite{Hilgert1989,BrunetKramer1980}, and Folland's monograph summarises both \cite{Folland1989}. The composition law extends algebraically beyond the convergence of the composition integral, through a product of symmetric matrices that is associative wherever it is defined \cite{DerezinskiKarczmarczyk2020}. Unbounded solution operators of quadratic evolution equations on Fock spaces, with sharp boundedness criteria, are analysed in Refs.~\cite{AlemanViola2018,Viola2017}; their distinction between weak and strong solution operators concerns the evolution problem and differs from the weak and strong readings of an inserted resolution used here. For bounded Gaussian operators the levels studied below coincide. The present paper follows them through a chain of two unbounded factors with Gaussian kernels, where they separate.

The central example is the chain $e^{\mu a^2}e^{\lambda a^{\dagger2}}$, where $a$ and $a^\dagger$ are the annihilation and creation operators of a single mode. The two factors are the simplest unbounded members of the SU(1,1) family that generates squeezing \cite{Yuen1976,Truax1985}, and the normalisability limits of such exponentials were examined by Fisher, Nieto and Sandberg \cite{FisherNietoSandberg1984}. For this chain every level at which the matrix element $\bra n e^{\mu a^2}e^{\lambda a^{\dagger2}}\ket\alpha$ can exist is accessible in closed form (Proposition~\ref{prop:su11} and figure~\ref{fig:regions}). The Bochner conditions hold on a region $\mathcal R_{\mathrm B}$; the product exists as a vector on a larger region $\mathcal R_{\mathrm P}$; it exists as a form on a still larger region $\mathcal R_{\mathrm W}$; and the coherent-state integral and the Fock-space series converge on two regions $\mathcal R_{\mathrm I}$ and $\mathcal R_{\mathrm F}$ that contain $\mathcal R_{\mathrm W}$ and overlap without either containing the other. A single Hermite-type function ties these levels together (Corollary~\ref{cor:continuation}). On $\mathcal R_{\mathrm I}\setminus\mathcal R_{\mathrm F}$ it is still represented by a convergent integral, while the Fock series of the vacuum amplitude already diverges there. The integral generated by the resolution of the identity thus survives both outside the domain of the operator product and outside the convergence of the Fock expansion. On the real slice, an iterated form of the integral also covers $\mathcal R_{\mathrm F}\setminus\mathcal R_{\mathrm I}$, where the integral diverges absolutely (Proposition~\ref{prop:iterated}).

For bounded chains the same identity holds without conditions, and the content moves entirely to the kernels and the integrals. Worked chains of that kind, including a Kerr phase followed by squeezing or by a cubic phase gate, a two-mode squeezer followed by a beam splitter, and a three-factor heralded circuit that uses two successive insertions, are given in the companion paper \cite{AsadiPhysScr2026}; none of them is repeated here. Single insertions into functions of the creation operator are treated in Ref.~\cite{AsadiPhysScr2026}. The first version \cite{AsadiCordshooli2026v1} of the companion paper combined the strong form of the insertion with worked examples; its current version \cite{AsadiPhysScr2026} develops the examples, and the analytic questions are developed here.

Section~\ref{sec:prelim} sets up resolutions of the identity and their weak and strong readings. Section~\ref{sec:weak} proves the weak composition lemma and its nested form. Section~\ref{sec:su11} follows the chain $e^{\mu a^2}e^{\lambda a^{\dagger2}}$ through its five regions. Section~\ref{sec:numerics} reports numerical checks and Sec.~\ref{sec:discussion} concludes.

\section{Resolutions of the identity}
\label{sec:prelim}

Let $\Hs$ be a separable Hilbert space and $(\Omega,\eta)$ a $\sigma$-finite measure space. The measure $\eta$ is positive throughout. Signed and complex measures also occur in phase-space representations, but the construction below uses positivity at one point, the isometry in Lemma~\ref{lem:iso}, on which everything else rests.

\begin{definition}[Resolution of the identity]
A family $\{\ket\gamma\}_{\gamma\in\Omega}\subset\Hs$ \emph{resolves the identity with respect to $\eta$} if $\gamma\mapsto\braket\gamma\phi$ is measurable for every $\phi\in\Hs$ and
\begin{equation}
\int_\Omega \dd\eta(\gamma)\,|\braket\gamma\phi|^2=\|\phi\|^2\qquad\text{for all }\phi\in\Hs .
\label{eq:roi}
\end{equation}
The pair $(\{\ket\gamma\},\eta)$ is then called a \emph{resolution of the identity}, and $I=\int_\Omega\dd\eta(\gamma)\,\ketbra\gamma\gamma$ is shorthand for \eqref{eq:roi}.
\end{definition}

The measure is part of the data. The coherent states introduced below resolve the identity with respect to $\dd^2\beta/\pi$ and with respect to no other multiple of Lebesgue measure. Families with this property are also known as continuous Parseval frames \cite{AliAntoineGazeau2014}.

The symbol $\int\dd\eta(\gamma)\,\ketbra\gamma\gamma$ admits two readings, and the distinction runs through the paper. In the \emph{weak} reading, an identity containing it is an identity between numbers, asserted for a given bra and ket. In the \emph{strong} reading it is an identity between vectors, with the integral converging in norm as a Bochner integral \cite{DiestelUhl1977}. Definition~1 is weak, and the next lemma records what it gives.

\begin{lemma}[Weak resolution]
\label{lem:iso}
If $(\{\ket\gamma\},\eta)$ is a resolution of the identity, then for all $\chi,\phi\in\Hs$
\begin{equation}
\braket\chi\phi=\int_\Omega\dd\eta(\gamma)\,\braket\chi\gamma\braket\gamma\phi ,
\label{eq:weak}
\end{equation}
and the integral converges absolutely, with $\int|\braket\chi\gamma\braket\gamma\phi|\,\dd\eta\le\|\chi\|\,\|\phi\|$.
\end{lemma}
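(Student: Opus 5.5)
The plan is to read \eqref{eq:roi} as the statement that the analysis map $V:\Hs\to L^2(\Omega,\eta)$, $(V\phi)(\gamma)=\braket\gamma\phi$, is an isometry, and then to recover inner products by polarization. Measurability of $V\phi$ is part of Definition~1, and \eqref{eq:roi} says $\|V\phi\|_{L^2}=\|\phi\|$. So $V\phi\in L^2(\eta)$ for every $\phi$. Linearity of $V$ is immediate because $\braket\gamma\phi$ is linear in $\phi$. We must also check that $V$ is well defined on equivalence classes, but nothing is needed there, since we work pointwise with actual functions.

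The first step is the absolute-convergence bound. Note that $\braket\chi\gamma=\overline{(V\chi)(\gamma)}$, so Cauchy--Schwarz in $L^2(\eta)$ gives
\[
\int_\Omega |\braket\chi\gamma\braket\gamma\phi|\,\dd\eta\le\|V\chi\|_{L^2}\|V\phi\|_{L^2}=\|\chi\|\,\|\phi\| .
\]
This shows the integrand is in $L^1(\eta)$, and the integral in \eqref{eq:weak} converges absolutely. The integrand is measurable as a product of measurable functions.

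The second step is the identity itself, for which I would use polarization. Both sides of \eqref{eq:weak} are sesquilinear in $(\chi,\phi)$. The right-hand side is $\langle V\chi,V\phi\rangle_{L^2}$, which is sesquilinear by linearity of $V$ and of the integral, now legitimate by the first step. The polarization identity writes
\[
\langle x,y\rangle=\tfrac14\sum_{k=0}^{3}i^{-k}\|x+i^k y\|^2 ,
\]
with the convention that the inner product is antilinear in the first slot. Apply it in $\Hs$ and in $L^2(\eta)$, and use $\|V(\chi+i^k\phi)\|=\|\chi+i^k\phi\|$ from \eqref{eq:roi}. This gives $\langle V\chi,V\phi\rangle=\braket\chi\phi$, which is \eqref{eq:weak}.

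No step is a real obstacle. The only point requiring care is the order of the argument: absolute convergence has to be established before the right-hand side is expanded sesquilinearly, so that splitting the integral of a sum is justified. Positivity of $\eta$ enters exactly here, because it makes $L^2(\eta)$ a Hilbert space with a genuine Cauchy--Schwarz inequality and polarization. Alternatively, one could polarize the integrands pointwise first and integrate each of the four absolutely convergent terms $|\braket\gamma{\chi+i^k\phi}|^2$ separately. This avoids invoking $L^2$ structure beyond \eqref{eq:roi} and gives the same conclusion.
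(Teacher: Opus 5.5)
Your proposal is correct and follows the paper's proof essentially step for step: \eqref{eq:roi} makes $V\phi=\braket{\cdot}{\phi}$ an isometry into $L^2(\Omega,\eta)$, polarisation identifies $\langle V\chi,V\phi\rangle_{L^2}$ with $\braket\chi\phi$, and Cauchy--Schwarz in $L^2$ gives the absolute bound. The only difference is that you write out the polarisation formula and the measurability of the integrand, which the paper leaves implicit.
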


\begin{proof}
By \eqref{eq:roi} the map $V\phi=\braket{\cdot}{\phi}$ is an isometry $\Hs\to L^2(\Omega,\eta)$. Polarisation gives $\braket\chi\phi=\langle V\chi,V\phi\rangle_{L^2}$, which is \eqref{eq:weak}, and the Cauchy--Schwarz inequality in $L^2$ gives the bound.
\end{proof}

For a densely defined operator $A$ we denote by $A^*$ its adjoint and by $\Dom(A)$ its domain \cite{ReedSimon1980,Kato1995}. We write $\bra\chi A\ket\gamma:=\braket{\chi}{A\gamma}$ for $\gamma\in\Dom(A)$ and $\bra\gamma B\ket\psi:=\braket{\gamma}{B\psi}$ for $\psi\in\Dom(B)$.

The canonical example is the family of Glauber coherent states on $\Omega=\C$ with $\dd\eta=\dd^2\beta/\pi$,
\begin{equation}
\ket\beta=e^{-|\beta|^2/2}\sum_{n\ge0}\frac{\beta^n}{\sqrt{n!}}\ket n,\qquad
\braket\beta\alpha=e^{-\frac12|\alpha|^2-\frac12|\beta|^2+\beta^*\alpha}.
\label{eq:cs}
\end{equation}
For this family, \eqref{eq:roi} is the isometry of the Bargmann transform $\phi\mapsto\tilde\phi(z)=\sum_n\braket n\phi\,z^n/\sqrt{n!}$ onto the Bargmann space $\mathcal F^2$ of entire functions square-integrable against $e^{-|z|^2}\dd^2z/\pi$ \cite{Bargmann1961,Zhu2012}. Under this transform the annihilation operator $a$ acts as $\dd/\dd z$ and the creation operator $a^\dagger$ as multiplication by $z$, and
\begin{equation}
\braket\beta\phi=e^{-|\beta|^2/2}\,\tilde\phi(\beta^*).
\label{eq:bargmann}
\end{equation}

\section{Composition in the weak sense}
\label{sec:weak}

For bounded operators $A$ and $B$, Lemma~\ref{lem:iso} applied to $A^*\chi$ and $B\psi$ gives at once
\[
\bra\chi AB\ket\psi=\int_\Omega\dd\eta(\gamma)\,\bra\chi A\ket\gamma\,\bra\gamma B\ket\psi .
\]
For unbounded operators the left side may fail to exist, since $B\psi$ need not lie in $\Dom(A)$, while the right side can remain an absolutely convergent integral. The next lemma identifies the right side under hypotheses on the bra, the ket and the family alone.

\begin{lemma}[One insertion between two factors]
\label{lem:single}
Let $(\{\ket\gamma\},\eta)$ be a resolution of the identity, let $A$ be densely defined and $B$ any linear operator in $\Hs$. Suppose that
\begin{enumerate}
\item[(i)] $\psi\in\Dom(B)$,
\item[(ii)] $\chi\in\Dom(A^*)$,
\item[(iii)] $\ket\gamma\in\Dom(A)$ for $\eta$-almost every $\gamma$.
\end{enumerate}
Then
\begin{equation}
\braket{A^*\chi}{B\psi}=\int_\Omega\dd\eta(\gamma)\,\bra\chi A\ket\gamma\,\bra\gamma B\ket\psi ,
\label{eq:single}
\end{equation}
with absolute convergence. If in addition $B\psi\in\Dom(A)$, the left-hand side equals $\bra\chi AB\ket\psi$.
\end{lemma}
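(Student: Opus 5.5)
The plan is to reduce the statement to Lemma~\ref{lem:iso} applied to the pair of vectors $A^*\chi$ and $B\psi$. Both lie in $\Hs$, by hypotheses (ii) and (i) respectively. Lemma~\ref{lem:iso} then gives
\[
\braket{A^*\chi}{B\psi}=\int_\Omega\dd\eta(\gamma)\,\braket{A^*\chi}{\gamma}\braket{\gamma}{B\psi},
\]
with absolute convergence and the bound $\int|\braket{A^*\chi}\gamma\braket\gamma{B\psi}|\,\dd\eta\le\|A^*\chi\|\,\|B\psi\|$. Measurability of both factors comes for free from the definition of a resolution of the identity, since each is of the form $\gamma\mapsto\braket\gamma\phi$ or its conjugate for a fixed $\phi\in\Hs$.

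The second step identifies the integrand with the one in \eqref{eq:single}. By the notational convention, $\bra\gamma B\ket\psi=\braket{\gamma}{B\psi}$, so the second factor already matches. For the first factor I use hypothesis (iii) on the full-measure set where $\ket\gamma\in\Dom(A)$. On that set the defining property of the adjoint gives $\braket{A^*\chi}{\gamma}=\braket{\chi}{A\gamma}=\bra\chi A\ket\gamma$. This is where density of $\Dom(A)$ enters, since it is needed for $A^*$ to be well defined. On the exceptional null set the integrand is irrelevant, so the two integrands agree $\eta$-almost everywhere. Hence the integrals coincide and the absolute convergence transfers.

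The final clause is immediate. If $B\psi\in\Dom(A)$, then applying the adjoint relation once more, now to the vector $B\psi$, gives $\braket{A^*\chi}{B\psi}=\braket{\chi}{AB\psi}=\bra\chi AB\ket\psi$.

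I expect no real obstacle here. The only point needing care is that the set where $\ket\gamma\notin\Dom(A)$ is null but the integrand $\bra\chi A\ket\gamma$ is undefined there. I would handle this by stating explicitly that the right side of \eqref{eq:single} is read as an integral of an a.e.-defined function. That function equals the measurable function $\gamma\mapsto\braket{A^*\chi}\gamma$ almost everywhere, so its measurability is inherited without separate argument.
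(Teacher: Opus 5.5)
Your proposal is correct and follows the paper's proof essentially step for step. You apply Lemma~\ref{lem:iso} to $A^*\chi$ and $B\psi$, identify $\braket{A^*\chi}{\gamma}=\bra\chi A\ket\gamma$ on the full-measure set given by (iii), which also settles measurability, and read the final clause off the definition of the adjoint.
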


\begin{proof}
Apply Lemma~\ref{lem:iso} to the pair $A^*\chi$, $B\psi$ and use $\braket{A^*\chi}{\gamma}=\braket{\chi}{A\gamma}$ for $\gamma\in\Dom(A)$. In particular, $\gamma\mapsto\bra\chi A\ket\gamma$ coincides $\eta$-almost everywhere with the measurable function $\gamma\mapsto\braket{A^*\chi}{\gamma}$, so no separate measurability assumption is needed. The last statement is the definition of the adjoint.
\end{proof}

The left-hand side of \eqref{eq:single} is the product $AB$ evaluated as a sesquilinear form \cite{Kato1995}. It exists under (i)--(ii) whether or not $B\psi\in\Dom(A)$, and Sec.~\ref{sec:su11} shows both cases occurring within one family of operators. When $A$ is bounded and everywhere defined, (ii) and (iii) hold automatically and only (i) remains; when both factors are, \eqref{eq:single} holds for all $\chi$ and $\psi$.

To iterate, let $F_1,\dots,F_N$ be linear operators and write $P_k:=F_NF_{N-1}\cdots F_k$ for $2\le k\le N$, with the natural domain of the product.

\begin{theorem}[Weak nested identity]
\label{thm:nested}
Let $(\{\ket\gamma\},\eta)$ be a resolution of the identity and suppose that
\begin{enumerate}
\item[(N1)] $\psi\in\Dom(F_1)$;
\item[(N2)] for each $k=2,\dots,N$, the operator $P_k$ is densely defined and $\chi\in\Dom(P_k^*)$;
\item[(N3)] for each $k=2,\dots,N$, $\ket\gamma\in\Dom(P_k)$ for $\eta$-almost every $\gamma$.
\end{enumerate}
Then
\begin{multline}
\braket{P_2^*\chi}{F_1\psi}
=\int\!\dd\eta(\gamma_1)\Biggl[\int\!\dd\eta(\gamma_2)\cdots\biggl[\int\!\dd\eta(\gamma_{N-1})\,
\bra\chi F_N\ket{\gamma_{N-1}}\bra{\gamma_{N-1}}F_{N-1}\ket{\gamma_{N-2}}\biggr]\cdots\\
\times\bra{\gamma_2}F_2\ket{\gamma_1}\Biggr]\bra{\gamma_1}F_1\ket{\psi},
\label{eq:nested}
\end{multline}
where the outermost integral converges absolutely and, for $\eta$-almost every value of the outer variables, so does each inner one. If moreover $\psi\in\Dom(F_N\cdots F_1)$, the left-hand side equals $\bra\chi F_N\cdots F_1\ket\psi$.
\end{theorem}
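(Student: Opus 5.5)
We prove the identity by induction on $N$, applying Lemma~\ref{lem:single} once per level and peeling off one factor at a time from the right. For $N=2$ the statement is exactly Lemma~\ref{lem:single} with $A=P_2=F_2$ and $B=F_1$: (N1) is (i), (N2) with $k=2$ is (ii), and (N3) with $k=2$ is (iii). So we assume $N\ge3$.

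\textbf{Outermost step.} Apply Lemma~\ref{lem:single} with $A=P_2$, $B=F_1$, $\psi$ and $\chi$. The hypotheses (i)--(iii) are again (N1), (N2) and (N3) for $k=2$. This gives
\[
\braket{P_2^*\chi}{F_1\psi}=\int\dd\eta(\gamma_1)\,\braket{P_2^*\chi}{\gamma_1}\,\bra{\gamma_1}F_1\ket\psi,
\]
with absolute convergence, and we may replace $\braket{P_2^*\chi}{\gamma_1}$ by $\braket{\chi}{P_2\gamma_1}$ for a.e.\ $\gamma_1$.

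\textbf{Inner step.} Fix $\gamma_1$ in the full-measure set where $\ket{\gamma_1}\in\Dom(P_2)$. Since $P_2=F_N\cdots F_2$ has the natural domain of the product, this gives $\ket{\gamma_1}\in\Dom(F_2)$ and $F_2\gamma_1\in\Dom(P_3)$, with $P_2\gamma_1=P_3F_2\gamma_1$. Now apply Lemma~\ref{lem:single} with $A=P_3$, $B=F_2$, ket $\gamma_1$ and bra $\chi$. Hypothesis (i) is $\gamma_1\in\Dom(F_2)$, and (ii) and (iii) are (N2) and (N3) for $k=3$. We get
\[
\braket{\chi}{P_2\gamma_1}=\braket{P_3^*\chi}{F_2\gamma_1}=\int\dd\eta(\gamma_2)\,\braket{P_3^*\chi}{\gamma_2}\bra{\gamma_2}F_2\ket{\gamma_1},
\]
absolutely convergent for each such $\gamma_1$. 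Here the first equality uses $F_2\gamma_1\in\Dom(P_3)$ and the definition of the adjoint.

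\textbf{Iteration and bookkeeping.} Repeating this with $P_k$, $F_{k-1}$ and ket $\gamma_{k-2}$ for $k=4,\dots,N$ produces the nested integrals. At the innermost level $P_N=F_N$, and $\braket{P_N^*\chi}{\gamma_{N-1}}=\bra\chi F_N\ket{\gamma_{N-1}}$ for a.e.\ $\gamma_{N-1}$. Formally the induction statement is: for each $k$ and a.e.\ $\gamma_{k-2}$, the quantity $\braket{\chi}{P_{k-1}\gamma_{k-2}}$ equals the nested integral over $\gamma_{k-1},\dots,\gamma_{N-1}$.

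The main point needing care is the null sets. Each exceptional set depends only on the variable currently being integrated, namely $\{\gamma:\gamma\notin\Dom(P_k)\}$. It does not depend on the outer variables, so countably many null sets suffice. Measurability of the inner integral as a function of the outer variable is automatic: it equals $\braket{P_{k}^*\chi}{\gamma}$ a.e., which is measurable by the definition of a resolution of the identity. This gives the a.e.\ absolute convergence claimed for the inner integrals.

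\textbf{Final clause.} If $\psi\in\Dom(F_N\cdots F_1)$, then $F_1\psi\in\Dom(P_2)$. The definition of the adjoint then gives $\braket{P_2^*\chi}{F_1\psi}=\braket{\chi}{P_2F_1\psi}=\bra\chi F_N\cdots F_1\ket\psi$.
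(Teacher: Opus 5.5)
Your proposal is correct and follows the paper's proof essentially step for step: Lemma~\ref{lem:single} with $A=P_2$, $B=F_1$ for the outer integral, then repeated application with $A=P_k$, $B=F_{k-1}$ on the ket $\gamma_{k-2}$ using $P_{k-1}\gamma=P_kF_{k-1}\gamma$ on the natural domain, measurability of the kernels through $\braket{P_k^*\chi}{\gamma}$, and the definition of the adjoint for the final clause. Your remark that each exceptional set $\{\gamma\notin\Dom(P_k)\}$ does not depend on the outer variables makes explicit the almost-everywhere bookkeeping that the paper leaves implicit.
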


\begin{proof}
By Lemma~\ref{lem:single} with $A=P_2$ and $B=F_1$,
\[
\braket{P_2^*\chi}{F_1\psi}=\int\dd\eta(\gamma_1)\,\braket{\chi}{P_2\gamma_1}\,\bra{\gamma_1}F_1\ket\psi .
\]
For $N=2$ this is \eqref{eq:nested}. For $N>2$ and $\gamma_1\in\Dom(P_2)$ we have $F_2\gamma_1\in\Dom(P_3)$ and
\[
\braket{\chi}{P_2\gamma_1}=\braket{\chi}{P_3F_2\gamma_1}=\braket{P_3^*\chi}{F_2\gamma_1}.
\]
Lemma~\ref{lem:single} with $A=P_3$, $B=F_2$ and the vector $\gamma_1$ expands this as an absolutely convergent integral over $\gamma_2$ with kernel $\braket{\chi}{P_3\gamma_2}\bra{\gamma_2}F_2\ket{\gamma_1}$. Repeating the step for $k=3,\dots,N-1$ and using $P_N=F_N$ in the last step gives \eqref{eq:nested}. As in the proof of Lemma~\ref{lem:single}, every kernel $\gamma\mapsto\braket{\chi}{P_k\gamma}$ is measurable without further assumptions. The final statement is again the definition of the adjoint.
\end{proof}

\begin{remark}
Equation~\eqref{eq:nested} is an iterated integral. By Fubini's theorem it becomes a Lebesgue integral over $\Omega^{N-1}$ when the product of the absolute values of the kernels is jointly integrable, a condition that \eqref{eq:nested} itself does not require. The chain of Sec.~\ref{sec:su11} uses a single insertion ($N=2$); a three-factor chain with two insertions is worked out in Ref.~\cite{AsadiPhysScr2026}.
\end{remark}

The strong reading of the insertion, used in Refs.~\cite{AsadiMirzaee2026,AsadiCordshooli2026v1}, rests on Bochner integration and Hille's theorem, and gives a vector identity.

\begin{proposition}[Strong form]
\label{prop:strong}
Let $A$ be closed and densely defined, $\phi\in\Hs$, and $\ket\gamma\in\Dom(A)$ for $\eta$-almost every $\gamma$. Suppose that
\begin{equation}
\int_\Omega\dd\eta(\gamma)\,\|\gamma\|\,|\braket\gamma\phi|<\infty
\qquad\text{and}\qquad
\int_\Omega\dd\eta(\gamma)\,\|A\gamma\|\,|\braket\gamma\phi|<\infty .
\label{eq:bochner}
\end{equation}
Then $\phi\in\Dom(A)$, $A\phi=\int\dd\eta(\gamma)\,A\ket\gamma\braket\gamma\phi$ as a Bochner integral, and
\begin{equation}
\braket{\chi}{A\phi}=\int_\Omega\dd\eta(\gamma)\,\bra\chi A\ket\gamma\braket\gamma\phi\qquad\text{for every }\chi\in\Hs .
\label{eq:strong}
\end{equation}
\end{proposition}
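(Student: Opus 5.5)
The plan is to build the vector $\Phi:=\int\dd\eta(\gamma)\,\ket\gamma\braket\gamma\phi$ as a Bochner integral and identify it with $\phi$. We then use Hille's theorem to push the closed operator $A$ through the integral, and obtain \eqref{eq:strong} by pairing with $\chi$.

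First we settle measurability. $\Hs$ is separable, so by Pettis' theorem it suffices to show that $\gamma\mapsto\ket\gamma$ is weakly measurable. This holds because $\gamma\mapsto\braket\gamma\phi$ is measurable for every $\phi$ by Definition~1. The scalar function $\gamma\mapsto\braket\gamma\phi$ is measurable for the same reason. Hence $\gamma\mapsto\ket\gamma\braket\gamma\phi$ is strongly measurable, and the first condition in \eqref{eq:bochner} says exactly that its norm is integrable. So $\Phi$ exists as a Bochner integral. Bounded functionals commute with Bochner integrals, so for any $\chi\in\Hs$ we get $\braket\chi\Phi=\int\dd\eta\,\braket\chi\gamma\braket\gamma\phi=\braket\chi\phi$ by Lemma~\ref{lem:iso}. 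Therefore $\Phi=\phi$.

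Next comes the step I expect to need the most care: strong measurability of $\gamma\mapsto A\ket\gamma$, defined $\eta$-a.e. by hypothesis. Weak measurability is handled as follows. For $\chi\in\Dom(A^*)$ we have $\braket{\chi}{A\gamma}=\braket{A^*\chi}{\gamma}$, which is measurable. $A$ is closed and densely defined, so $\Dom(A^*)$ is dense, and $\braket{\chi}{A\gamma}$ is then a pointwise limit of measurable functions for every $\chi\in\Hs$. Pettis' theorem in the separable space $\Hs$ then gives strong measurability. A cleaner alternative is to work in the graph space $\mathcal G(A)\subset\Hs\oplus\Hs$, which is closed and separable, and to apply Pettis there to $\gamma\mapsto(\gamma,A\gamma)$.

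Either way, the second condition in \eqref{eq:bochner} makes $\gamma\mapsto A\ket\gamma\braket\gamma\phi$ Bochner integrable. Hille's theorem for closed operators now states that if $f$ and $Af$ are both Bochner integrable, then $\int f\in\Dom(A)$ and $A\int f=\int Af$. Applied here, it gives $\phi=\Phi\in\Dom(A)$ and $A\phi=\int\dd\eta\,A\ket\gamma\braket\gamma\phi$. Finally, pairing with an arbitrary $\chi\in\Hs$ and again moving the bounded functional $\braket\chi\cdot$ inside the Bochner integral yields \eqref{eq:strong}. The integral converges absolutely, since its integrand is dominated by $\|\chi\|\,\|A\gamma\|\,|\braket\gamma\phi|$.
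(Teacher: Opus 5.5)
Your proposal is correct and follows the paper's proof step by step: Pettis' theorem in the separable space $\Hs$ gives strong measurability, with the density of $\Dom(A^*)$ handling $\gamma\mapsto A\ket\gamma$; Lemma~\ref{lem:iso} identifies the first Bochner integral with $\phi$; then Hille's theorem and pairing with $\chi$ finish the argument. The graph-space variant you mention is not really cleaner, because weak measurability of $\gamma\mapsto(\gamma,A\gamma)$ in $\mathcal G(A)$ still needs $\gamma\mapsto\langle g|A\gamma\rangle$ to be measurable for arbitrary $g$, which is the same density argument.
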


\begin{proof}
The map $\gamma\mapsto\ket\gamma\braket\gamma\phi$ is weakly measurable by Definition~1. So is $\gamma\mapsto A\ket\gamma\braket\gamma\phi$: for $f\in\Hs$ choose $f_k\in\Dom(A^*)$ with $f_k\to f$; then $\braket{f}{A\gamma}=\lim_k\braket{A^*f_k}{\gamma}$ pointwise, a limit of measurable functions. Since $\Hs$ is separable, weak measurability implies strong measurability \cite{DiestelUhl1977}, and the first condition in \eqref{eq:bochner} makes $\gamma\mapsto\ket\gamma\braket\gamma\phi$ Bochner integrable. Its integral equals $\phi$, because bounded functionals commute with the Bochner integral and Lemma~\ref{lem:iso} identifies the resulting scalars. The second condition and Hille's theorem for closed operators \cite{DiestelUhl1977,HillePhillips1957} give $\phi\in\Dom(A)$ and the formula for $A\phi$. Applying $\bra\chi$ gives \eqref{eq:strong}.
\end{proof}

For normalised families such as coherent states, $\|\gamma\|=1$ and the first condition in \eqref{eq:bochner} reduces to $\int|\braket\gamma\phi|\,\dd\eta<\infty$. Proposition~\ref{prop:strong} delivers the insertion for all bras at once and places $\phi$ in $\Dom(A)$. Its conditions weigh the growth of $\|A\gamma\|$ against the coefficients $\braket\gamma\phi$, so they are formulated in terms of the chosen family, whereas membership in $\Dom(A)$ is not. Section~\ref{sec:su11} follows the strong reading, the operator product, the weak reading and the bare integral through one family of unbounded chains, where all four part ways.

\begin{remark}[Converse]
\label{rem:converse}
For a strongly measurable map, Bochner integrability is equivalent to the integrability of its norm \cite{DiestelUhl1977}. Both maps in the proof of Proposition~\ref{prop:strong} are strongly measurable, so the two conditions in \eqref{eq:bochner} are also necessary for the corresponding Bochner integrals to exist. They therefore characterise the strong reading for a given resolving family, and Proposition~\ref{prop:su11}(d),(e) shows that the product $A\phi$ can exist where they fail.
\end{remark}

\section{Beyond the domain of the product}
\label{sec:su11}

\subsection{Operators}

For $\lambda\in\C$ let $B_\lambda:=e^{\lambda a^{\dagger2}}$ be the operator of multiplication by $e^{\lambda z^2}$ on its maximal domain $\Dom(B_\lambda)=\{\tilde g\in\mathcal F^2:\,e^{\lambda z^2}\tilde g\in\mathcal F^2\}$ in the Bargmann representation. With its maximal domain, a multiplication operator on a reproducing-kernel Hilbert space is closed, because point evaluations are continuous. For $|\lambda|<1/2$ the domain contains all polynomials, since $z^ne^{\lambda z^2}\in\mathcal F^2$ exactly when $|\lambda|<1/2$; hence $B_\lambda$ is closed and densely defined. For $|\mu|<1/2$ we define the two-photon annihilation exponential as the adjoint
\begin{equation}
A_\mu:=\bigl(B_{\bar\mu}\bigr)^*,\qquad A_\mu^*=B_{\bar\mu},
\label{eq:Amu}
\end{equation}
which is closed and densely defined. Using \eqref{eq:bargmann} one checks that every coherent state lies in $\Dom(A_\mu)$ with $A_\mu\ket\beta=e^{\mu\beta^2}\ket\beta$. One also checks that $\ket n\in\Dom(A_\mu^*)$ and $\ket\alpha\in\Dom(B_\lambda)$ for all $n$ and $\alpha$ when $|\mu|,|\lambda|<1/2$.

\subsection{Three expressions for one matrix element}

For $\lambda,\mu,\alpha\in\C$ and $n\ge0$ we consider the following three expressions.
\begin{itemize}
\item The \emph{nested integral}
\begin{equation}
J_n:=\frac1\pi\int_\C\dd^2\beta\;\beta^n\exp\bigl(-|\beta|^2+\mu\beta^2+\lambda\beta^{*2}+\alpha\beta^*\bigr),
\label{eq:Jn}
\end{equation}
which is $\sqrt{n!}\,e^{|\alpha|^2/2}$ times the right-hand side of \eqref{eq:single} for $A=A_\mu$, $B=B_\lambda$, $\chi=\ket n$ and $\psi=\ket\alpha$.
\item The \emph{Fock series}
\begin{equation}
\Phi_n:=\sum_{j=0}^\infty\frac{\mu^j}{j!}\,e^{\lambda\partial_\alpha^2}\alpha^{n+2j},
\label{eq:Fn}
\end{equation}
obtained by expanding $A_\mu^*\ket n$ and $B_\lambda\ket\alpha$ in the number basis. Here $e^{\lambda\partial_\alpha^2}\alpha^m$ is a polynomial.
\item The \emph{closed form}
\begin{equation}
C_n:=D^{-1/2}\,e^{\mu\alpha^2/D}\;t^n H_n\!\Bigl(\frac{\alpha}{2Dt}\Bigr),
\qquad D:=1-4\lambda\mu,\qquad t^2=-\frac\lambda D,
\label{eq:Cn}
\end{equation}
with the principal branch of $D^{-1/2}$. The product $t^nH_n(\cdot)$ is a polynomial in $t^2$ and does not depend on the sign of $t$.
\end{itemize}
The closed form $C_n$ is analytic on the domain $\mathcal U=\{(\lambda,\mu):\,1-4\lambda\mu\notin(-\infty,0]\}$.

\begin{proposition}
\label{prop:su11}
Let $\alpha\in\C$ and $n\ge0$.
\begin{enumerate}
\item[(a)] \emph{Integral.} $J_n$ converges absolutely if and only if $|\mu+\bar\lambda|<1$, and then $J_n=C_n$.
\item[(b)] \emph{Fock series.} $\Phi_n$ converges absolutely and locally uniformly on $\{|4\lambda\mu|<1\}$, and there $\Phi_n=C_n$. For $\alpha=n=0$ it diverges when $|4\lambda\mu|>1$.
\item[(c)] \emph{Weak composition.} If $|\lambda|<1/2$ and $|\mu|<1/2$, Lemma~\ref{lem:single} applies to $A_\mu$, $B_\lambda$, $\ket n$, $\ket\alpha$, and
\begin{equation}
\braket{A_\mu^*n}{B_\lambda\alpha}=\frac{e^{-|\alpha|^2/2}}{\sqrt{n!}}\,\Phi_n=\frac{e^{-|\alpha|^2/2}}{\sqrt{n!}}\,J_n .
\label{eq:su11weak}
\end{equation}
\item[(d)] \emph{Strong form.} Under the hypotheses of (c), the conditions \eqref{eq:bochner} for $A=A_\mu$ and $\phi=B_\lambda\ket\alpha$ hold if and only if $|\mu+\bar\lambda|<1/2$.
\item[(e)] \emph{Product.} Under the hypotheses of (c), $B_\lambda\ket\alpha\in\Dom(A_\mu)$ if and only if $|\lambda|<\tfrac12|1-4\lambda\mu|$. In that case $A_\mu B_\lambda\ket\alpha$ has the Bargmann function
\begin{equation}
e^{-|\alpha|^2/2}\,D^{-1/2}\exp\!\Bigl[\frac{\alpha z+\mu\alpha^2+\lambda z^2}{D}\Bigr].
\label{eq:product}
\end{equation}
\end{enumerate}
\end{proposition}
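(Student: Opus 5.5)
I will handle the five parts separately. Gaussian integrals in $\beta,\beta^*$ do the analytic work, and the Bargmann space membership criterion (a function $e^{q(z)}$ with $q(z)=cz^2+bz$ lies in $\mathcal F^2$ iff $|c|<1/2$) settles the domain questions.

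\emph{Part (a).} I write $\beta=x+iy$. The real part of the exponent is $-|\beta|^2+\mathrm{Re}(\mu\beta^2+\lambda\beta^{*2})=-|\beta|^2+\mathrm{Re}((\mu+\bar\lambda)\beta^2)$, since $\mathrm{Re}(\lambda\beta^{*2})=\mathrm{Re}(\bar\lambda\beta^2)$. This quadratic form has eigenvalues $-1\pm|\mu+\bar\lambda|$, and the linear term and the factor $\beta^n$ do not affect integrability. Hence absolute convergence holds iff $|\mu+\bar\lambda|<1$. On that region I would evaluate $J_n$ by differentiating the generating function: $\sum_n J_n s^n/n!$ is the Gaussian integral with source $s\beta+\alpha\beta^*$. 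The standard formula $\frac1\pi\int e^{-|\beta|^2+\mu\beta^2+\lambda\beta^{*2}+s\beta+\alpha\beta^*}\dd^2\beta=D^{-1/2}\exp[(s\alpha+\mu\alpha^2+\lambda s^2)/D]$ holds first for small real parameters. It then extends to the whole region because both sides are analytic and the region is connected and contains $\lambda=\mu=0$; the branch of $D^{-1/2}$ is fixed by continuity and is principal, since $|\mu+\bar\lambda|<1$ keeps $D$ off $(-\infty,0]$ (this needs a short check). Expanding $\exp[\lambda s^2/D+s\alpha/D]$ against the Hermite generating function $e^{2xu-u^2}$ with $u=ts$, $t^2=-\lambda/D$, gives $C_n$.

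\emph{Part (b).} Here $e^{\lambda\partial^2}\alpha^{m}=\sum_k \frac{m!}{k!(m-2k)!}\lambda^k\alpha^{m-2k}$. I would bound the double sum in $j,k$ by $\sum_j |\mu|^j(n+2j)!\,|\lambda|^{k}\dots$. Its dominant part, $\binom{2j}{j}|\lambda\mu|^j\sim (4|\lambda\mu|)^j$, gives absolute convergence for $|4\lambda\mu|<1$, locally uniformly in all variables. For $\alpha=n=0$ the series is $\sum_j \mu^j\lambda^j(2j)!/(j!)^2$, whose terms grow for $|4\lambda\mu|>1$, so it diverges there. To get $\Phi_n=C_n$ I would reorganize the series as a power series in $\lambda,\mu$. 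On the overlap near $0$, where both (a) and (b) hold, expanding $e^{\mu\beta^2+\lambda\beta^{*2}}$ inside $J_n$ and integrating term by term with $\frac1\pi\int\beta^p\beta^{*q}e^{-|\beta|^2+\alpha\beta^*}$ reproduces $\Phi_n$. Dominated convergence justifies this for small $|\lambda|,|\mu|$. Identity of analytic functions on the connected set $\{|4\lambda\mu|<1\}$, where $D$ has positive real part, then finishes the argument.

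\emph{Parts (c)--(e).} For (c), hypotheses (i)--(iii) of Lemma~\ref{lem:single} were already noted after \eqref{eq:Amu}. The kernels are $\bra nA_\mu\ket\beta=e^{\mu\beta^2}\braket n\beta$ and $\bra\beta B_\lambda\ket\alpha=e^{-|\beta|^2/2}e^{\lambda\beta^{*2}}\tilde\alpha(\beta^*)$, and inserting them into \eqref{eq:single} produces $J_n$ with the stated prefactor. The series form follows from Parseval applied to $A_\mu^*\ket n=B_{\bar\mu}\ket n$ and $B_\lambda\ket\alpha$ in the number basis. Note that $|\lambda|,|\mu|<1/2$ implies both $|4\lambda\mu|<1$ and $|\mu+\bar\lambda|<1$, consistent with (a) and (b). For (d), $\|A_\mu\beta\|=|e^{\mu\beta^2}|$, and $|\braket\beta\phi|$ is $e^{-|\beta|^2/2}|e^{\lambda\beta^{*2}+\alpha\beta^*}|e^{-|\alpha|^2/2}$. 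The second Bochner integrand therefore has real exponent $-\tfrac12|\beta|^2+\mathrm{Re}((\mu+\bar\lambda)\beta^2)+\dots$, integrable iff $|\mu+\bar\lambda|<1/2$. The first condition, with exponent $-\tfrac12|\beta|^2+\mathrm{Re}(\bar\lambda\beta^2)$, needs $|\lambda|<1/2$, which is given. For (e), the key step, and I expect the main obstacle, is identifying $A_\mu$ on a Gaussian vector. I would compute $\langle B_{\bar\mu}f,\phi\rangle$ for $f$ ranging over coherent states, which yields the candidate $A_\mu\phi$ of \eqref{eq:product} pointwise as a Bargmann function (via a convergent Gaussian integral, or analytic continuation of it). Then $\phi\in\Dom(A_\mu)$ iff that candidate lies in $\mathcal F^2$, because coherent states form a core-like set for $B_{\bar\mu}$. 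A closedness/density argument is needed here, and it is where care is required. Given this, membership reduces to $|\lambda/D|<1/2$, which is the stated condition.
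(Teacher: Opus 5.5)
\textbf{Parts (a)--(d).} These are essentially sound, but your route through (a) and (b) differs from the paper's. You evaluate the complex Gaussian integral directly. You then get $\Phi_n=J_n$ near the origin by termwise integration, which Tonelli justifies for $|\lambda|+|\mu|<1$. The paper never evaluates $J_n$ directly. It obtains $J_n=\Phi_n$ on $\mathcal R_{\mathrm W}$ from (c), as two evaluations of the same inner product, and computes $\Phi_n$ through the heat semigroup acting on a Gaussian. Both routes work, but two things remain open in yours:
\begin{itemize}
\item You flag but do not carry out the check that $D\notin(-\infty,0]$ on $|\mu+\bar\lambda|<1$.
\item Your convergence bound for $\Phi_n$ tracks only the top term $\binom{2j}{j}|\lambda\mu|^j$. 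The terms with fewer powers of $\lambda$ and more of $\alpha$ must be shown to contribute only a subexponential factor (the paper obtains $e^{C\sqrt j}$). This must hold uniformly on compact sets that reach $\lambda=0$, where the paper replaces $|\lambda|$ by $|\lambda|+\epsilon$.
\end{itemize}

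\textbf{The gap: the converse in (e).} Your forward direction is fine. For the converse, testing $\phi=B_\lambda\ket\alpha$ against coherent states gives $\braket{B_{\bar\mu}f}{\phi}=\braket{f}{\psi}$ only for $f$ in the span $\mathcal C$ of coherent states. That is, it shows $\phi\in\Dom\bigl((B_{\bar\mu}|_{\mathcal C})^*\bigr)$. This domain contains $\Dom(A_\mu)$, and equals it only if $\mathcal C$ is a core for $B_{\bar\mu}$.

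That core property is not automatic. Coherent states are eigenvectors of $A_\mu=B_{\bar\mu}^*$. On a reproducing-kernel space the kernels are automatically a core for the adjoint $M_\varphi^*$ of a maximal multiplication operator, not for $M_\varphi$ itself. You assert the property (``core-like set'') and defer the ``closedness/density argument'', but that deferred step is the entire content of the converse.

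It also cannot be bypassed naively. Write $f\in\Dom(B_{\bar\mu})$ through the reproducing formula and swap integrals by Fubini. If you bound the double integral using only $f\in\mathcal F^2$, via Cauchy--Schwarz against $|f(w)|e^{-|w|^2/2}$, the bound is finite exactly when $|\mu+\bar\lambda|<1/2$. That is just $\mathcal R_{\mathrm B}$, so it misses points such as $(\lambda,\mu)=(0.2,0.4)$, which (e) is meant to cover.

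\textbf{The missing idea.} The paper closes the gap by analytic continuation in $\lambda$ inside the closed graph $G$ of $A_\mu$:
\begin{itemize}
\item Fix $\mu$ and let $\lambda$ range over the connected set $\Lambda_\mu=\{|\lambda|<1/2,\ 2|\lambda|<|1-4\lambda\mu|\}$, an intersection of two discs.
\item On $\Lambda_\mu$ the map $\lambda\mapsto(\phi_\lambda,\psi_\lambda)$ is holomorphic into $\Hs\oplus\Hs$.
\item Near $\lambda=-\bar\mu$ one has $|\mu+\bar\lambda|<1/2$, so Proposition~\ref{prop:strong} together with (d) places $(\phi_\lambda,\psi_\lambda)$ in $G$.
\item For every $(f,g)\in G^\perp$, the holomorphic function $\braket{f}{\phi_\lambda}+\braket{g}{\psi_\lambda}$ vanishes near $-\bar\mu$, hence on all of $\Lambda_\mu$.
\item Therefore $(\phi_\lambda,\psi_\lambda)\in G$ throughout $\Lambda_\mu$.
\end{itemize}
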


\begin{proof}
(a) Since $\mathrm{Re}(\lambda\beta^{*2})=\mathrm{Re}(\bar\lambda\beta^2)$, the modulus of the integrand is $|\beta|^n\exp\{-|\beta|^2+\mathrm{Re}[(\mu+\bar\lambda)\beta^2]+\mathrm{Re}(\alpha\beta^*)\}$. The maximum of the quadratic part over directions is $(|\mu+\bar\lambda|-1)|\beta|^2$. The integral therefore converges if and only if $|\mu+\bar\lambda|<1$; in the borderline case it diverges along the critical direction for every linear term.

On this convex region $J_n$ is analytic in $(\lambda,\mu)$ by dominated convergence on compact subsets, and $1-4\lambda\mu\notin(-\infty,0]$. Indeed, $4\lambda\mu\ge1$ with $\lambda=\rho e^{i\varphi}$ forces $\mu=\sigma e^{-i\varphi}$ with $\rho\sigma\ge1/4$, whereas $|\mu+\bar\lambda|=\rho+\sigma<1$ implies $\rho\sigma<1/4$. By (b) and (c), proved below, $J_n=\Phi_n=C_n$ on the open set $\{|\lambda|<1/2,\ |\mu|<1/2\}$, which is contained in both $\{|\mu+\bar\lambda|<1\}$ and $\{|4\lambda\mu|<1\}$. The identity theorem on the connected region $|\mu+\bar\lambda|<1$ then gives $J_n=C_n$.

(b) We have
\[
e^{\lambda\partial_\alpha^2}\alpha^m=\sum_{i\le m/2}\frac{m!}{i!\,(m-2i)!}\,\lambda^i\alpha^{m-2i}.
\]
Fix a compact set $K$ of parameters $(\lambda,\mu,\alpha)$ with $|4\lambda\mu|<1$, choose $\epsilon\in(0,1)$ such that $4(|\lambda|+\epsilon)|\mu|\le1-\epsilon$ on $K$, and put $\lambda_\epsilon=|\lambda|+\epsilon$ and $M_j=(n+2j)!\,\lambda_\epsilon^j/j!$. For $m=n+2j$ and $i=j-l$ with $0\le l\le j$, the bounds $j!/(j-l)!\le j^l$ and $(n+2l)!\ge(2l)!$ give
\[
\frac{m!}{i!\,(m-2i)!}\,|\lambda|^i|\alpha|^{m-2i}\le M_j\,|\alpha|^n\,\frac{(j|\alpha|^2/\lambda_\epsilon)^l}{(2l)!},
\]
whose sum over $l$ is at most $M_j|\alpha|^n\cosh(|\alpha|\sqrt{j/\lambda_\epsilon})$. The at most $\lfloor n/2\rfloor$ remaining terms, $i=j+k$ with $k\ge1$, are bounded by $M_j\,\lambda_\epsilon^k|\alpha|^{n-2k}$. Since $(n+2j)!/(j!)^2\le(2j+n)^n\,4^j$, the $j$-th term of \eqref{eq:Fn} is bounded on $K$ by $C_K(2j+n)^n(1-\epsilon)^j\,e^{C_K'\sqrt j}$ with constants $C_K,C_K'$. This gives absolute and locally uniform convergence on $\{|4\lambda\mu|<1\}$, including $\lambda=0$ and $\alpha=0$. For $\alpha=n=0$ the terms are $(2j)!\,(\lambda\mu)^j/(j!)^2\sim(4\lambda\mu)^j/\sqrt{\pi j}$, so the series diverges for $|4\lambda\mu|>1$.

For small parameters, summing \eqref{eq:Fn} against $w^n/n!$ gives $e^{\lambda\partial_\alpha^2}e^{w\alpha+\mu\alpha^2}$, and the action of the heat semigroup on a Gaussian \cite{Widder1975} yields
\begin{equation}
\sum_{n\ge0}\frac{w^n}{n!}\Phi_n=D^{-1/2}\exp\!\Bigl[\frac{w\alpha+\mu\alpha^2+\lambda w^2}{D}\Bigr].
\label{eq:genfun}
\end{equation}
Extracting the coefficient of $w^n$ with $\exp(2xy-y^2)=\sum_nH_n(x)y^n/n!$ and $y=wt$ gives $\Phi_n=C_n$ near the origin. On $\{|4\lambda\mu|<1\}$ we have $\mathrm{Re}\,D>0$, so $C_n$ is analytic there, and this region is connected, being star-shaped with respect to the origin. The identity theorem extends $\Phi_n=C_n$ to all of it.

(c) The hypotheses (i)--(iii) of Lemma~\ref{lem:single} were verified in the previous subsection. The kernels are $\bra nA_\mu\ket\beta=e^{\mu\beta^2}\braket n\beta$ and $\bra\beta B_\lambda\ket\alpha=e^{\lambda\beta^{*2}}\braket\beta\alpha$, and their product, integrated against $\dd^2\beta/\pi$, is $e^{-|\alpha|^2/2}J_n/\sqrt{n!}$.

For the Fock form, the Bargmann function of $A_\mu^*\ket n=B_{\bar\mu}\ket n$ is $e^{\bar\mu z^2}z^n/\sqrt{n!}$. Hence $\braket{m}{A_\mu^*n}=\bar\mu^j\sqrt{m!}/(j!\sqrt{n!})$ for $m=n+2j$, and zero otherwise. The Bargmann function of $B_\lambda\ket\alpha$ is $e^{-|\alpha|^2/2}e^{\lambda z^2+\alpha z}$, whose $m$-th Taylor coefficient is $e^{-|\alpha|^2/2}e^{\lambda\partial_\alpha^2}\alpha^m/m!$. Evaluating the inner product in the number basis gives the first equality in \eqref{eq:su11weak}.

(d) By \eqref{eq:bargmann}, $|\braket\beta\phi|=e^{-|\alpha|^2/2}|e^{\lambda\beta^{*2}+\alpha\beta^*}|e^{-|\beta|^2/2}$, while $\|\beta\|=1$ and $\|A_\mu\beta\|=|e^{\mu\beta^2}|$. The first integral in \eqref{eq:bochner} converges because $|\lambda|<1/2$. The second has the quadratic exponent $-\frac12|\beta|^2+\mathrm{Re}[(\mu+\bar\lambda)\beta^2]$, which is integrable if and only if $|\mu+\bar\lambda|<1/2$.

(e) Write $\phi_\lambda=B_\lambda\ket\alpha$ and let $\psi_\lambda$ be the function \eqref{eq:product}. If $\phi_\lambda\in\Dom(A_\mu)$, then $\braket{m}{A_\mu\phi_\lambda}=\braket{A_\mu^*m}{\phi_\lambda}=e^{-|\alpha|^2/2}C_m/\sqrt{m!}$ by (c) and (b). By \eqref{eq:genfun} these are the number-basis coefficients of $\psi_\lambda$, so $A_\mu\phi_\lambda=\psi_\lambda$, and $\psi_\lambda$ lies in $\mathcal F^2$ only if $|\lambda/D|<1/2$.

For the converse, fix $\mu$ and let $\Lambda_\mu=\{\lambda:\,|\lambda|<1/2,\ |2\lambda|<|1-4\lambda\mu|\}$. For $\mu\ne0$ the second condition reads $|2\lambda|<2|\mu|\,|2\lambda-1/(2\mu)|$ with $2|\mu|<1$, which describes an open Apollonius disc, so $\Lambda_\mu$ is the intersection of two discs and is connected. On $\Lambda_\mu$ the maps $\lambda\mapsto\phi_\lambda$ and $\lambda\mapsto\psi_\lambda$ are holomorphic with values in $\Hs$, because their inner products with any fixed vector are holomorphic by dominated convergence on compact subsets. Near $\lambda=-\bar\mu$, which lies in $\Lambda_\mu$, we have $|\mu+\bar\lambda|<1/2$, so Proposition~\ref{prop:strong} and (d) give $\phi_\lambda\in\Dom(A_\mu)$, and $A_\mu\phi_\lambda=\psi_\lambda$ by the first part. The graph $G$ of $A_\mu$ is closed. For every $(f,g)\in G^\perp$ the holomorphic function $\lambda\mapsto\braket f{\phi_\lambda}+\braket g{\psi_\lambda}$ vanishes near $-\bar\mu$ and hence on $\Lambda_\mu$. Thus $(\phi_\lambda,\psi_\lambda)\in G$ for all $\lambda\in\Lambda_\mu$.
\end{proof}

\subsection{Five regions}

Proposition~\ref{prop:su11} separates five regions of the $(\lambda,\mu)$ space:
\begin{align*}
&\mathcal R_{\mathrm B}:\ |\lambda|,|\mu|<\tfrac12,\ |\mu+\bar\lambda|<\tfrac12; &
&\mathcal R_{\mathrm P}:\ |\lambda|,|\mu|<\tfrac12,\ |\lambda|<\tfrac12|1-4\lambda\mu|;\\
&\mathcal R_{\mathrm W}:\ |\lambda|,|\mu|<\tfrac12; &
&\mathcal R_{\mathrm I}:\ |\mu+\bar\lambda|<1;\qquad
\mathcal R_{\mathrm F}:\ |4\lambda\mu|<1 .
\end{align*}
On $\mathcal R_{\mathrm B}$ the strong reading holds; on $\mathcal R_{\mathrm P}$ the product $A_\mu B_\lambda\ket\alpha$ exists as a vector; on $\mathcal R_{\mathrm W}$ it exists as a form; on $\mathcal R_{\mathrm I}$ the coherent-state integral converges; on $\mathcal R_{\mathrm F}$ the Fock series converges. Proposition~\ref{prop:strong} together with (d) and (e) gives $\mathcal R_{\mathrm B}\subset\mathcal R_{\mathrm P}$, and on $\mathcal R_{\mathrm W}$ both $|\mu+\bar\lambda|<1$ and $|4\lambda\mu|<1$. Hence
\[
\mathcal R_{\mathrm B}\subset\mathcal R_{\mathrm P}\subset\mathcal R_{\mathrm W}\subset\mathcal R_{\mathrm I}\cap\mathcal R_{\mathrm F}.
\]
Figure~\ref{fig:regions} shows the real slice. Each inclusion is strict, and the last two regions are not nested:
\begin{itemize}
\item At $(\lambda,\mu)=(0.2,0.4)$ the product exists, since $|\lambda|=0.2<\frac12|1-0.32|=0.34$, and $A_\mu B_\lambda\ket\alpha$ is the Gaussian vector \eqref{eq:product}. The Bochner conditions fail, since $|\mu+\bar\lambda|=0.6$. The strong reading misses a product that exists.
\item At $\lambda=\mu=0.4$ the product no longer exists as a vector, since $|\lambda|=0.4>\frac12|1-0.64|=0.18$, and the matrix element is only the form value \eqref{eq:su11weak}.
\item At $(\lambda,\mu)=(0.8,0.1)$ the intermediate vector $B_\lambda\ket\alpha$ is not normalisable, yet both $J_n$ and $\Phi_n$ converge and agree with $C_n$.
\item At $(\lambda,\mu)=(0.6,-0.6)$ the integral converges and equals $C_n$, while $|4\lambda\mu|=1.44$ places the point outside $\mathcal R_{\mathrm F}$, where the Fock series of the vacuum amplitude diverges.
\item At $(\lambda,\mu)=(0.8,0.25)$ the Fock series converges to $C_n$, while $|\mu+\bar\lambda|=1.05$ and the integral diverges absolutely; its iterated form over $\mathrm{Im}\,\beta$ first converges to $C_n$ (Proposition~\ref{prop:iterated}).
\end{itemize}

\begin{figure}[t]
\centering
\includegraphics[width=\textwidth]{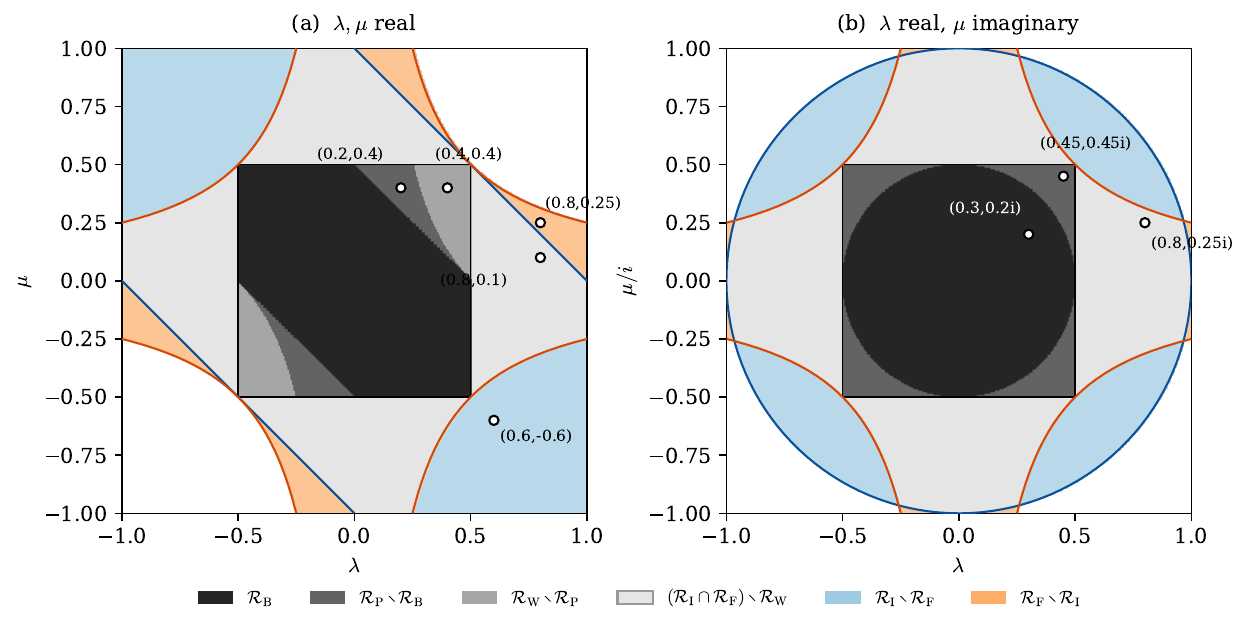}
\caption{Two slices of the regions of Sec.~\ref{sec:su11}. In both panels the square is $\mathcal R_{\mathrm W}$, the blue curve bounds $\mathcal R_{\mathrm I}$ and the hyperbolae $|4\lambda\mu|=1$ bound $\mathcal R_{\mathrm F}$; circles mark points discussed in the text and in Sec.~\ref{sec:numerics}. (a) $\lambda,\mu$ real: $\mathcal R_{\mathrm I}$ is the band $|\lambda+\mu|<1$, and inside the square $\mathcal R_{\mathrm B}$ (black) and $\mathcal R_{\mathrm P}\setminus\mathcal R_{\mathrm B}$ (dark grey) are separated by the line $|\lambda+\mu|=1/2$ and the curve $|\lambda|=\frac12|1-4\lambda\mu|$. The panel is invariant under $(\lambda,\mu)\to(-\lambda,-\mu)$, and only the boundary of $\mathcal R_{\mathrm P}$ breaks the reflection $\lambda\leftrightarrow\mu$ (Remark~\ref{rem:symmetry}). (b) $\lambda$ real and $\mu$ imaginary, with $\mu/i$ on the vertical axis: here $\mathcal R_{\mathrm I}$ and $\mathcal R_{\mathrm B}$ are discs and $\mathcal R_{\mathrm P}$ fills the square.}
\label{fig:regions}
\end{figure}

\begin{remark}[Squeezing and symmetry]
\label{rem:symmetry}
The parameters have a direct meaning. The normally ordered form of the squeeze operator $S(\xi)=\exp[\frac12(\xi^*a^2-\xi a^{\dagger2})]$ with $\xi=re^{i\theta}$ \cite{BarnettRadmore1997} gives $B_\lambda\ket0\propto S(\xi)\ket0$ for $\lambda=-\frac12e^{i\theta}\tanh r$, so $|\lambda|<\frac12$ is finite squeezing and $|\lambda|\to\frac12$ is the limit $r\to\infty$. Since $A_\mu^*=B_{\bar\mu}$, the parameter $\mu$ plays the same role on the side of the bra, and the form value \eqref{eq:su11weak} is the overlap of the squeezed states $B_{\bar\mu}\ket n$ and $B_\lambda\ket\alpha$; for $n=0$ and $\alpha=0$ it equals $(1-4\lambda\mu)^{-1/2}$.

The phase rotation $e^{i\vartheta\hat n}$, where $\hat n=a^\dagger a$, maps $(\lambda,\mu)$ to $(\lambda e^{2i\vartheta},\mu e^{-2i\vartheta})$ and leaves all five regions invariant; on the real slice, $\vartheta=\pi/2$ gives the point symmetry of figure~\ref{fig:regions}. The complex conjugate of $\braket{A_\mu^*\chi}{B_\lambda\psi}$ is $\braket{A_{\bar\lambda}^*\psi}{B_{\bar\mu}\chi}$, so exchanging bra and ket maps $(\lambda,\mu)$ to $(\bar\mu,\bar\lambda)$, which on the real slice is the reflection $\lambda\leftrightarrow\mu$. The conditions defining $\mathcal R_{\mathrm B}$, $\mathcal R_{\mathrm W}$, $\mathcal R_{\mathrm I}$ and $\mathcal R_{\mathrm F}$ are invariant under this exchange. The invariant of the rotation is $\arg(\lambda\mu)$, so panel~(a) of figure~\ref{fig:regions} represents every pair with $\arg(\lambda\mu)\in\{0,\pi\}$, purely imaginary pairs included, while panel~(b) belongs to a different orbit. The condition defining $\mathcal R_{\mathrm P}$ is not, and its boundary is the only curve in figure~\ref{fig:regions} that breaks the reflection.

The reason is visible in \eqref{eq:product}: the output $A_\mu B_\lambda\ket\alpha$ is a squeezed state with parameter $\lambda/D$, and the product exists as a vector exactly when this output squeezing is finite, $|\lambda/D|<\frac12$. Of all the levels, this is the only one that depends on which side carries the ket. At $(\lambda,\mu)=(0.45,0.2)$ the product does not exist as a vector, while at the mirror point $(0.2,0.45)$ it does; both lie in $\mathcal R_{\mathrm W}\setminus\mathcal R_{\mathrm B}$, where the overlap and the integral are available in either order.
\end{remark}

The two panels of figure~\ref{fig:regions} are not related by the phase rotation of Remark~\ref{rem:symmetry}. On the slice of panel~(b), where $\arg(\lambda\mu)=\pi/2$, the region $\mathcal R_{\mathrm I}$ is the disc $|\lambda|^2+|\mu|^2<1$ rather than a band, and $|1-4\lambda\mu|=\sqrt{1+16|\lambda\mu|^2}\ge1$, so $|\lambda|<\frac12\le\frac12|1-4\lambda\mu|$ and the product exists as a vector everywhere in $\mathcal R_{\mathrm W}$: the level $\mathcal R_{\mathrm P}$ becomes invisible. The slice also changes which examples are where. The point $(0.8,0.25i)$ lies in $\mathcal R_{\mathrm I}\cap\mathcal R_{\mathrm F}$, whereas its real counterpart $(0.8,0.25)$ lies outside $\mathcal R_{\mathrm I}$.

The five regions are held together by a single analytic function.

\begin{corollary}
\label{cor:continuation}
Let $\alpha\in\C$, $n\ge0$, and let $M_n(\lambda,\mu)=\sqrt{n!}\,e^{|\alpha|^2/2}\braket{A_\mu^*n}{B_\lambda\alpha}$ on $\mathcal R_{\mathrm W}$. Then $M_n$ is holomorphic on $\mathcal R_{\mathrm W}$, and $C_n$ is its unique holomorphic continuation to the connected open set $\mathcal R_{\mathrm I}\cup\mathcal R_{\mathrm F}$. On $\mathcal R_{\mathrm I}$ the continuation is given by the convergent integral $J_n$, and on $\mathcal R_{\mathrm F}$ by the convergent series $\Phi_n$.
\end{corollary}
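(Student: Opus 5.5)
The plan is to read everything off Proposition~\ref{prop:su11}, which already equates the three expressions on the relevant sets. The proof then reduces to three checks: holomorphy of $C_n$ on $\mathcal R_{\mathrm I}\cup\mathcal R_{\mathrm F}$, connectedness of that set, and the identity theorem.

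First, on $\mathcal R_{\mathrm W}$, Proposition~\ref{prop:su11}(c) gives $M_n=\Phi_n$, and (b) gives $\Phi_n=C_n$ there, since $\mathcal R_{\mathrm W}\subset\mathcal R_{\mathrm F}$. Hence $M_n=C_n$ on $\mathcal R_{\mathrm W}$. The function $C_n$ is holomorphic on $\mathcal U=\{1-4\lambda\mu\notin(-\infty,0]\}$, so $M_n$ is holomorphic on $\mathcal R_{\mathrm W}$.

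Second, I will show $\mathcal R_{\mathrm I}\cup\mathcal R_{\mathrm F}\subset\mathcal U$. On $\mathcal R_{\mathrm F}$ we have $\mathrm{Re}\,D>0$, as noted in the proof of (b). On $\mathcal R_{\mathrm I}$ the proof of (a) showed $1-4\lambda\mu\notin(-\infty,0]$. So $C_n$ is holomorphic on the union. Part (a) gives $C_n=J_n$ on $\mathcal R_{\mathrm I}$ and part (b) gives $C_n=\Phi_n$ on $\mathcal R_{\mathrm F}$, which are the last two assertions.

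Third, for uniqueness I need the union to be open and connected, and to contain the nonempty open set $\mathcal R_{\mathrm W}$. Openness is clear because both regions are defined by strict inequalities on continuous functions. $\mathcal R_{\mathrm I}$ is convex, hence connected. $\mathcal R_{\mathrm F}$ is star-shaped with respect to the origin, since scaling $(\lambda,\mu)\mapsto(s\lambda,s\mu)$ with $s\in[0,1]$ multiplies $|4\lambda\mu|$ by $s^2$. Both regions contain the origin, so the union is connected. Any holomorphic function on the union that agrees with $M_n$ on $\mathcal R_{\mathrm W}$ agrees with $C_n$ on an open subset, and hence, by the identity theorem in several complex variables on a connected domain, everywhere.

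None of these steps should be a real obstacle. The only point needing care is the claim that $C_n$ is single-valued and holomorphic on the whole union despite the square-root branch and the auxiliary $t$. This holds because $t^nH_n(\alpha/(2Dt))$ is a polynomial in $t^2=-\lambda/D$ and in $1/D$, hence holomorphic wherever $D\neq0$. The principal branch of $D^{-1/2}$ is holomorphic precisely on $\mathcal U$, which contains the union by the second step.
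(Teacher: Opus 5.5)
Your proposal is correct and follows essentially the same route as the paper. You identify $M_n=\Phi_n=C_n$ on $\mathcal R_{\mathrm W}$ via Proposition~\ref{prop:su11}(b),(c), place $\mathcal R_{\mathrm I}\cup\mathcal R_{\mathrm F}$ inside $\mathcal U$, obtain connectedness from the convexity of $\mathcal R_{\mathrm I}$ and the star-shapedness of $\mathcal R_{\mathrm F}$ about their common point $0$, and conclude by the identity theorem; your remarks on openness and on the single-valuedness of $t^nH_n$ only make explicit what the paper leaves implicit.
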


\begin{proof}
On $\mathcal R_{\mathrm W}$, $M_n=J_n=\Phi_n=C_n$ by Proposition~\ref{prop:su11}(a)--(c). The region $\mathcal R_{\mathrm I}$ is convex and $\mathcal R_{\mathrm F}$ is star-shaped with respect to the origin, so their union is connected and contains $\mathcal R_{\mathrm W}$. It lies in $\mathcal U$, where $C_n$ is holomorphic, and uniqueness follows from the identity theorem.
\end{proof}

Corollary~\ref{cor:continuation} separates three objects that are easily run together: the matrix element, which exists on $\mathcal R_{\mathrm W}$; its analytic continuation $C_n$; and the two convergent representations of that continuation, the integral on $\mathcal R_{\mathrm I}$ and the series on $\mathcal R_{\mathrm F}$. On $\mathcal R_{\mathrm I}\setminus\mathcal R_{\mathrm F}$ the continuation has a convergent coherent-state integral while the Fock series diverges, as (b) shows for $\alpha=n=0$. The integral generated by the resolution of the identity thus survives outside the domain of the operator product from which it was derived, and outside the convergence of the Fock expansion by which that product is usually computed.

On the real slice the integral reaches further in iterated form.

\begin{proposition}[Iterated integrals on the real slice]
\label{prop:iterated}
Let $\lambda,\mu\in\mathbb R$ with $4\lambda\mu<1$, let $\alpha\in\C$ and $n\ge0$, and write $\beta=x+iy$ in \eqref{eq:Jn}.
\begin{enumerate}
\item[(a)] If $\lambda+\mu>-1$, the iterated integral over $y$ first and then over $x$ converges and equals $C_n$.
\item[(b)] If $\lambda+\mu<1$, the iterated integral over $x$ first and then over $y$ converges and equals $C_n$.
\end{enumerate}
Every real number satisfies $\lambda+\mu>-1$ or $\lambda+\mu<1$, so each point of the real slice of $\mathcal R_{\mathrm F}$ falls under (a) or (b); on $\mathcal R_{\mathrm I}$ both orders agree with the absolutely convergent integral.
\end{proposition}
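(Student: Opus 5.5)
With $\beta=x+iy$ and $\lambda,\mu$ real, the quadratic part of the exponent in \eqref{eq:Jn} is
\[
-|\beta|^2+\mu\beta^2+\lambda\beta^{*2}=-(1-\lambda-\mu)x^2-(1+\lambda+\mu)y^2+2i(\mu-\lambda)xy .
\]
The linear part is $\alpha\beta^*=\alpha x-i\alpha y$. The prefactor $\beta^n$ is a polynomial in $x$ and $y$. The plan is to carry out the inner integral explicitly as a one-dimensional Gaussian integral with polynomial weight, and then check that the outer integrand is again a Gaussian times a polynomial which decays.

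For (a), take $\lambda+\mu>-1$. The $y$-integrand has modulus $|\beta|^n\exp[-(1+\lambda+\mu)y^2+O(|y|)]$ for fixed $x$, so the inner integral converges absolutely. Completing the square in $y$ with $c=1+\lambda+\mu>0$ and $b=2i(\mu-\lambda)x-i\alpha$ gives
\[
\int e^{-cy^2+by}\,\dd y=\sqrt{\pi/c}\,e^{b^2/(4c)},
\]
and the polynomial weight produces only a polynomial factor in $x$. The resulting $x$-exponent is
\[
-(1-\lambda-\mu)x^2-\frac{(\mu-\lambda)^2x^2}{1+\lambda+\mu}+(\text{linear})
=-\frac{(1-\lambda-\mu)(1+\lambda+\mu)+(\mu-\lambda)^2}{1+\lambda+\mu}\,x^2+\cdots
=-\frac{1-4\lambda\mu}{1+\lambda+\mu}\,x^2+\cdots .
\]
Since $4\lambda\mu<1$ and $c>0$, this coefficient is negative, so the outer integral converges absolutely. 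Case (b) is the same computation with the roles of $x$ and $y$ exchanged, using $1-\lambda-\mu>0$; the outer coefficient comes out as $-(1-4\lambda\mu)/(1-\lambda-\mu)$.

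To identify the value with $C_n$ I would avoid computing Hermite polynomials directly and argue by analytic continuation. It is convenient to let $\lambda$ and $\mu$ become complex in a neighbourhood of the real set in question. The iterated integral $I_a(\lambda,\mu)$ remains defined, and holomorphic by dominated convergence on compacts, on a complex neighbourhood of the real domain $\{4\lambda\mu<1,\ \lambda+\mu>-1\}$. The reason is that the two conditions become $\mathrm{Re}(1+\lambda+\mu)>0$ and $\mathrm{Re}\bigl[(1-4\lambda\mu)/(1+\lambda+\mu)\bigr]>0$. These are open conditions, and the Gaussian formula with the principal square root continues to hold under them. Alternatively one can stay real and use real-analyticity.

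On $\mathcal R_{\mathrm I}$ the absolutely convergent integral equals its iterated form by Fubini, so $I_a=J_n=C_n$ there by Proposition~\ref{prop:su11}(a). The real domain of (a) is connected: it is the region above the line $\lambda+\mu=-1$ and below the hyperbola branch, and it contains the band $|\lambda+\mu|<1$ near the origin. It therefore meets $\mathcal R_{\mathrm I}$ in an open set. Real-analytic functions, or holomorphic ones on a connected complex neighbourhood, that agree on an open set agree everywhere, which gives $I_a=C_n$. Case (b) is identical.

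The main obstacle is the connectedness and neighbourhood bookkeeping, not the Gaussian computation. One needs that $\{\lambda+\mu>-1,\ 4\lambda\mu<1\}$ is connected; this holds because each horizontal line meets it in an interval, or one can check star-shapedness about the origin. One also needs that $C_n$ is analytic there, which holds because $D>0$. A cleaner shortcut is to observe that the explicit iterated Gaussian is itself an elementary analytic expression in $(\lambda,\mu)$ on the whole domain. Matching it with $C_n$ near the origin then suffices. The final sentence of the statement follows from Fubini on $\mathcal R_{\mathrm I}$ and the trivial dichotomy.
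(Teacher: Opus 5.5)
Your proof is correct. The convergence half matches the paper's: the same quadratic form, completing the square in the inner variable, and the same outer coefficient $-(1-4\lambda\mu)/(1\pm(\lambda+\mu))$.

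Where you differ is in identifying the value. The paper replaces $\beta^n$ by $e^{w\beta}$ and carries out both Gaussian integrations, linear terms included, to obtain the generating function \eqref{eq:genfun} directly. It then differentiates $n$ times at $w=0$ under the iterated integral and extracts the Hermite coefficient as in Proposition~\ref{prop:su11}(b).

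You compute only the decay rate and import the value by analytic continuation:
\begin{itemize}
\item on the band $|\lambda+\mu|<1$, Fubini and Proposition~\ref{prop:su11}(a) give $I_a=J_n=C_n$;
\item $I_a$ and $C_n$ (where $D>0$) are real-analytic on the connected set $\{4\lambda\mu<1,\ \lambda+\mu>-1\}$;
\item the identity theorem then propagates the equality to the whole set.
\end{itemize}
For connectedness, the horizontal-slice argument needs a transversal such as the antidiagonal $\lambda=-\mu$. Star-shapedness about the origin, which you also mention, gives it at once.

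The paper's computation is self-contained. It identifies the value pointwise without invoking Proposition~\ref{prop:su11}(a), and hence without the Fock-series and weak-composition chain behind it, and without any topology of the domain. It is also the successive Gaussian integration used in Sec.~\ref{sec:numerics}. Its cost is tracking the linear terms and justifying differentiation in $w$ under an iterated integral that is not absolutely convergent as a double integral.

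Your route avoids that bookkeeping. In exchange it needs analyticity in $(\lambda,\mu)$ of that same non-absolutely convergent iterated integral. As you note, this is most cleanly obtained from the fact that the iterated Gaussian is an explicit elementary function, so a structural form of the integration is still required.
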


\begin{proof}
For real $\lambda,\mu$ the quadratic part of the exponent in \eqref{eq:Jn} is
\[
-(1-\lambda-\mu)\,x^2-(1+\lambda+\mu)\,y^2+2i(\mu-\lambda)\,xy .
\]
In case (a) the coefficient of $y^2$ is negative, and the integral over $y$ of a polynomial times this Gaussian converges for every $x$. Completing the square leaves an $x$-integrand equal to a polynomial times
\[
\exp\Bigl[-\frac{1-4\lambda\mu}{1+\lambda+\mu}\,x^2+\ell x\Bigr],
\]
with $\ell$ linear in $\alpha$, which is integrable since $4\lambda\mu<1$. To identify the value, replace $\beta^n$ by $e^{w\beta}$ with $w\in\C$. The two successive Gaussian integrations give
\[
\frac1\pi\int\!dx\int\!dy\;e^{-|\beta|^2+\mu\beta^2+\lambda\beta^{*2}+\alpha\beta^*+w\beta}
=D^{-1/2}\exp\Bigl[\frac{w\alpha+\mu\alpha^2+\lambda w^2}{D}\Bigr],
\]
which is the generating function \eqref{eq:genfun}. The inner integrals converge locally uniformly in $w$, so the $n$-th derivative at $w=0$ passes under the integrals, and the coefficient extraction in the proof of Proposition~\ref{prop:su11}(b) gives $C_n$. Case (b) is the same computation with $x$ and $y$ exchanged, where the second step carries the coefficient $-(1-4\lambda\mu)/(1-\lambda-\mu)$. On $\mathcal R_{\mathrm I}$ Fubini's theorem identifies both orders with the absolutely convergent integral.
\end{proof}

By the phase rotation of Remark~\ref{rem:symmetry}, which rotates the integration variable together with the parameters, the same statement holds for every pair with $\lambda\mu$ real, with the integration axes rotated by the corresponding angle. On these slices the iterated coherent-state integral thus represents the continuation $C_n$ on all of $\mathcal R_{\mathrm F}$, in addition to the absolutely convergent integral on $\mathcal R_{\mathrm I}$.

\section{Numerical verification}
\label{sec:numerics}

Phase-space integrals were computed by the trapezoidal rule on square grids, which converges spectrally for these Gaussian-type integrands. Series and special functions were computed in arbitrary-precision arithmetic, and each deviation below is the largest absolute difference over the quoted parameters and indices. All of them concern the chain $A_\mu B_\lambda$ of Sec.~\ref{sec:su11}, with $\alpha=0.3+0.2i$ and $n=0,\dots,3$:
\begin{itemize}
\item At $(\lambda,\mu)=(0.3,0.2i)$, $(0.3,-0.3+0.4i)$ and $(0.45,0.45)$, which lie in $\mathcal R_{\mathrm W}$, $J_n$ agrees with $C_n$ to between $6\times10^{-14}$ and $1.4\times10^{-13}$, and $\Phi_n$ agrees with $C_n$ to below $10^{-15}$.
\item At $(0.2,0.4)$, which lies in $\mathcal R_{\mathrm P}\setminus\mathcal R_{\mathrm B}$, $J_n$ agrees with $C_n$ to $5\times10^{-14}$ and $\Phi_n$ to $2\times10^{-16}$. The number-basis computation of $A_\mu B_\lambda\ket\alpha$ agrees with the Gaussian vector \eqref{eq:product} to $7\times10^{-18}$ in norm.
\item At $(0.8,0.1)$, where $B_\lambda\ket\alpha\notin\Hs$, $J_n$ agrees with $C_n$ to $1.1\times10^{-13}$ and $\Phi_n$ to $6\times10^{-16}$.
\item At $(0.6,-0.6)$ and $(0.6i,0.55i)$, which lie in $\mathcal R_{\mathrm I}\setminus\mathcal R_{\mathrm F}$, $J_n$ agrees with $C_n$ to $1.1\times10^{-13}$ and $6\times10^{-14}$, respectively.
\item At $(0.8,0.25)$, which lies in $\mathcal R_{\mathrm F}\setminus\mathcal R_{\mathrm I}$, $\Phi_n$ summed to $600$ terms agrees with $C_n$ to $9\times10^{-16}$.
\item At $(0.8,0.25)$ and $(-0.8,-0.25)$, the iterated integrals of Proposition~\ref{prop:iterated}, over $y$ first and over $x$ first respectively, computed by successive Gaussian integration in exact arithmetic, reproduce $C_n$ for $n=0,\dots,3$.
\item At $(-0.5+0.5i,\,-0.4-0.3i)$, outside both regions, the grid approximations to $J_n$ grow without bound as the grid is enlarged.
\end{itemize}

\section{Discussion}
\label{sec:discussion}

A resolution of the identity turns an operator product into an integral of kernels. For bounded factors this is a rewriting. For unbounded factors the rewriting splits into levels that coincide for bounded factors: the strong identity, the product as a vector, the product as a form, the convergent integral and the convergent series. In the chain $e^{\mu a^2}e^{\lambda a^{\dagger2}}$ all of these levels are explicit, and Proposition~\ref{prop:su11} and Corollary~\ref{cor:continuation} show that they are distinct. The weak composition lemma is what gives access to the lower levels. Its hypotheses involve only the bra, the ket and the resolving family, and the value it produces continues analytically into the region where only the integral survives. For bounded chains the same lemma is unconditional, and the content moves to the kernels and the integrals, which is the subject of the companion paper \cite{AsadiPhysScr2026}.

\paragraph{Limitations.} The non-unitary analysis covers the single-mode chain $A_\mu B_\lambda$; longer SU(1,1) chains and other orderings of the factors are not treated. The chain treated here uses a single insertion ($N=2$), and a three-factor chain is worked out in Ref.~\cite{AsadiPhysScr2026}. For $N\ge3$ the joint (Fubini) integrability of the resulting multi-fold integrals, needed to regard them as integrals over $\Omega^{N-1}$, remains to be examined.

\paragraph{Outlook.} Three extensions appear natural. The first is the continuum limit of \eqref{eq:nested} for time-sliced evolutions, where the weak reading may give a simpler route to the coherent-state path integral than the strong one \cite{DaubechiesKlauder1985,Klauder2011}. The second concerns the dependence of the strong conditions on the resolving family. Membership in $\Dom(A_\mu)$ does not depend on the family, while \eqref{eq:bochner} does, and one may ask whether squeezed coherent states, which resolve the identity with the same measure, recover the strong reading on all of $\mathcal R_{\mathrm P}$. The third concerns non-Gaussian unitaries whose coherent-state symbols have borderline growth in the Bargmann space, with the cubic phase gate \cite{AsadiCordshooliPRA} as a natural test case: being unitary, it satisfies the weak conditions trivially, while the strong conditions depend on the growth of its Bargmann symbol.

\section*{Data availability statement}
All data that support the findings of this study are included within the article.


\end{document}